\documentclass[conference]{IEEEtran}
\IEEEoverridecommandlockouts

\usepackage[bookmarks=false]{hyperref}
\usepackage{booktabs}
\usepackage{array,tabularx, makecell, multirow}
\usepackage{diagbox,multirow,lineno}
\usepackage{subcaption}
\usepackage{cite}
\usepackage{url}
\usepackage{amsmath,amssymb,amsfonts}
\usepackage{amsthm}
{

	\newtheorem{proposition}{Proposition}

}
\usepackage{algorithmic}
\usepackage{graphicx}
\usepackage{textcomp}
\usepackage{xcolor}
\begin{document}

\title{Version Age-of-Information-based Semantic Secrecy Region Analysis}

\author{\IEEEauthorblockN{Qian Wang\textsuperscript{1}, Bohai Li\textsuperscript{2}, Chao Sun\textsuperscript{1}, Lou Zhao\textsuperscript{1}, Chunshan Liu\textsuperscript{1} and Nikolaos Pappas\textsuperscript{3}
}%

\thanks{The work of Q. Wang was supported in part by the National Natural Science Foundation of China (NSFC) under Grant 62401187. The work of C. Sun was supported in part by the NSFC under Grant 62401186 and the Fundamental Research Funds for the Provincial Universities of Zhejiang under
Grant GK249909299001-028. The work of C. Liu was supported in part by the NSFC under Grant 62471163. The work of N. Pappas was partially supported by ELLIIT.}

\IEEEauthorblockA{\textsuperscript{1}School of Communication Engineering, Hangzhou Dianzi University, Hangzhou, China}

\IEEEauthorblockA{\textsuperscript{2}School of Information Science and Technology, Harbin Institute of Technology (Shenzhen), Shenzhen, China}
\IEEEauthorblockA{\textsuperscript{3}Department of Computer and Information Science, Link\"{o}ping University, Link\"{o}ping, Sweden.}
Emails: \textsuperscript{1}\{qian.wang, sunch, lou.zhao, chunshan.liu\}@hdu.edu.cn, \textsuperscript{2}libohai@hit.edu.cn,
\textsuperscript{3}nikolaos.pappas@liu.se

}

\maketitle

\begin{abstract}
This paper investigates semantic secrecy in wireless status-update systems using Version Age of Information (VAoI). We model the fundamental tradeoff between informative updates and confidentiality (semantic asymmetry between the legitimate destination and the eavesdropper) using a three-node wiretap channel. To evaluate secrecy performance, we introduce the concept of a semantic secrecy region (SSR) and analyze it from a geometrical perspective. By developing a two-dimensional Markov chain, we derive closed-form expressions for the destination's average VAoI and the probability of semantic asymmetry at the eavesdropper. These results enable the joint optimization of transmission probability and power control. For Rayleigh fading channels, we characterize the optimal SSR boundary and power allocation. Numerical results validate our analysis, showing that optimized power allocation significantly enlarges the secrecy region compared with fixed‑power benchmarks.
\end{abstract}

\begin{IEEEkeywords}
Version age of information (VAoI), semantic secrecy, semantic secrecy region, power allocation.  
\end{IEEEkeywords}

%
\section{Introduction}
The broadcast nature of wireless communication raises severe security and privacy concerns in Internet-of-Things (IoT) networks \cite{sicari2015security}, where an increasing number of devices and physical entities are connected to the Internet. In particular, due to the shared wireless medium, passive eavesdroppers positioned within range can intercept status updates intended for legitimate receivers. Wyner’s wiretap channel model \cite{wyner1975wire} established the information‑theoretic foundation for such scenarios, introducing secrecy capacity and related metrics \cite{bloch2011physical, mohapatra2024physical} that remain fundamental to physical‑layer confidentiality.

Beyond confidentiality, many IoT applications, such as remote monitoring and industrial automation, require the receiver to maintain timely and accurate knowledge of the underlying system state \cite{ageversion}. 
As such, security concerns extend beyond information leakage to freshness and relevance of transmitted status updates \cite{aoie2,aoie1,10620867}. From this perspective, eavesdroppers pose a threat not only by decoding data packets but also by tracking the source process in a timely manner. In other words, even if the system achieves a high secrecy rate, the eavesdropper may still pose a threat if its information is as fresh as the information at the legitimate receiver. 

Freshness-aware secure and covert communication has attracted attention in recent years \cite{aoie2,aoie1,10620867,aoiec1,aoiec2}. Most studies focus on the age of information (AoI), a metric that captures information freshness \cite{Kaul2012real}. For instance, the effect of eavesdropping on status updating was studied in \cite{aoie2}, and the relationship between information freshness and information leakage was investigated in \cite{aoie1}. AoI-based secrecy metrics were further developed in \cite{10620867}. Existing work on covert timely updating mainly focuses on optimizing the age performance at the legitimate receiver under covert or secrecy constraints \cite{aoiec1,aoiec2}. However, AoI measures the time elapsed since the generation of the latest received update and cannot capture the content of the information. To address this limitation, Version Age of Information (VAoI) was introduced in \cite{9517796}. Unlike AoI, VAoI reflects changes in source content, such as new content generation, and measures the version lag, making it more suitable for capturing information semantics. To the best of our knowledge, security-oriented communication designs based on VAoI remain largely unexplored.

In addition, research in physical-layer security has explored confidentiality performance from a spatial perspective \cite{sp1,sp2,sp3}. Existing work uses spatially interpretable measures to characterize secure communication regions, based on secrecy-capacity or secrecy-outage constraints \cite{sp1,sp2}. These studies provide useful insights into how channel conditions and node locations affect the secure transmission range, vulnerability regions \cite{sp2}, and guard zones \cite{sp3}. However, these spatial secrecy formulations are built on rate-based metrics and do not capture the semantic asymmetry between the legitimate receiver and the eavesdropper.

Motivated by this gap, this paper develops a VAoI-based security framework and introduces the concept of semantic secrecy region (SSR) to evaluate secrecy performance from a spatial perspective. Specifically, we consider a three-node wireless status updating system consisting of a source, a legitimate destination, and a passive eavesdropper (Eve). The source generates version updates according to a stochastic process; the destination seeks timely and informative updates, while the eavesdropper attempts to overhear transmissions. The design objective is twofold: the average VAoI at the legitimate destination should remain below a prescribed threshold $\Gamma$, while the probability that the average VAoI at Eve exceeds that at the destination (i.e., the probability of semantic asymmetry) should be larger than $\epsilon$. For given $\Gamma$ and $\epsilon$, the $(\Gamma,\epsilon)$-SSR is defined as the set of Eve's locations for which these two requirements can be satisfied simultaneously. In this sense, the set of coordinates outside the SSR represents spatial vulnerability zones that require protection from the viewpoint of semantic secrecy.

\section{System Model}

We consider a time-slotted wireless status updating system, where a source node $\mathrm{S}$ transmits data packets to a legitimate destination $\mathrm{D}$ in the presence of an eavesdropper $\mathrm{E}$. The packets transmitted by $\mathrm{S}$ come from an information source, and each packet transmission occupies one time slot.
After each transmission, $\mathrm{D}$ returns an ACK/NACK feedback signal to $\mathrm{S}$. The feedback link is assumed to be delay-free and error-free, as commonly assumed in the literature.

Let $h_{\mathrm{SX}}$ denote the channel coefficient from $\mathrm{S}$ to node $\mathrm{X}$, where $\mathrm{X} \in \{\mathrm{D}, \mathrm{E}\}$. The received signal at node $\mathrm{X}$ during time slot $t$ is given by $y_{\mathrm{X}}(t)=h_{\mathrm{SX}}x(t)+n_{\mathrm{X}}(t)$, where $x(t)$ denotes the transmitted signal satisfying $\mathbb{E}\left[|x(t)|^2\right] = P_{\mathrm{S}}$, and $n_{\mathrm{X}}(t)\! \sim\! \mathcal{CN} \left(0, \sigma_{\mathrm{X}}^2\right)$ is the additive white Gaussian noise at node $\mathrm{X}$.
The channel coefficient of the link from $\mathrm{S}$ to node $\mathrm{X}$ is modeled as Rayleigh fading, denoted by $h_{\mathrm{SX}} = \sqrt{\beta_{\mathrm{X}}}g_{\mathrm{X}}$, where $\beta_{\mathrm{X}} = A_{\mathrm{X}}d_{\mathrm{SX}}^{-\eta}$ denotes the large-scale fading coefficient, {$A_{\mathrm{X}}$ is a constant depending on signal and system parameters, $d_{\mathrm{SX}}$ represents the distance between $\mathrm{S}$ and $\mathrm{X}$, and $\eta$ is the path-loss exponent. Moreover, $g_{\mathrm{X}}\!\sim \!\mathcal{CN}(0, 1)$ denotes the small-scale fading coefficient.}
The channels follow a block fading model, which remain constant within each time slot and vary independently across different slots.

$\mathrm{S}$ aims to deliver timely and informative data to $\mathrm{D}$, employing VAoI as the semantic performance metric. VAoI jointly captures the content and the timeliness of information. More specifically, it measures how many versions the received data lags behind $\mathrm{S}$, computed by comparing the latest received version at $\mathrm{D}$ with the current version at $\mathrm{S}$. The arrival of a new version packet is modeled as a Bernoulli process with parameter $\lambda$. If a new version arrives at $\mathrm{S}$, the VAoI increases by $1$. If a version packet is received at $\mathrm{X}$, its VAoI resets to $0$. A new version packet is assumed to arrive at the end of each time slot. We define $I_{arr}(t)=1$ if a new version arrives at slot $t$; $I_{arr}(t)=0$ otherwise. To ensure timely and informative version transmission, only the latest version packet is kept at the source, while the outdated version packets are discarded. A randomized transmission scheme is studied, where $\mathrm{S}$ transmits the version packet with probability $p_{tx}$ if it has a fresher version{; remains inactive, otherwise. }

Meanwhile, $\mathrm{E}$ also attempts to obtain the latest version information by overhearing the wireless transmission. The transmission may suffer from outages due to fading. Let $p$ and $q$ denote the successful reception probabilities of the main channel and the eavesdropping channel, respectively, given by
\begin{equation}
    \begin{aligned}
    p
    \!= \!{\rm Pr}\biggl\{\log\biggl(1+\frac{|h_{\mathrm{SD}}|^2P_{\mathrm{S}}}{\sigma_{\mathrm{D}}^2}\biggr)\ge R_{\mathrm{tar}}\biggr\} =\exp\biggl(-\frac{\alpha_{\mathrm{D}}}{P_{\mathrm{S}}}\biggr),
    \end{aligned}
\end{equation}
\begin{equation}
    \begin{aligned}
    q
    = {\rm Pr}\biggl\{\log\biggl(1+\frac{|h_{\mathrm{SE}}|^2P_{\mathrm{S}}}{\sigma_{\mathrm{E}}^2}\biggr)\ge R_{\mathrm{tar}}\biggr\} =\exp\biggl(-\frac{\alpha_{\mathrm{E}}}{P_{\mathrm{S}}}\biggr),
    \end{aligned}
\end{equation} where $\alpha_{\mathrm{X}}=\left(2^{R_{\mathrm{tar}}}-1\right)d_{\mathrm{SX}}^{\eta}\sigma_{\mathrm{X}}^2/A_{\mathrm{X}}$, $\mathrm{X}\in\{\mathrm{D,E}\}$ and $R_{\mathrm{tar}}$ denotes the targeted spectrum efficiency.

We use $I_{\mathrm{X}}(t)$, ${\mathrm{X}}\in\{\mathrm{D,E}\}$ to denote the successful reception indicators at node $\mathrm{X}$. Specifically, $I_{\mathrm{X}}(t)=1$ if $\mathrm{X}$ successfully receives the transmitted version packet in time slot $t$, and $I_{\mathrm{X}}(t)=0$ otherwise. In this context, the VAoI at the $\mathrm{D}$ and $\mathrm{E}$, denoted by $v_{\mathrm{D}}(t)$ and $v_{\mathrm{E}}(t)$, respectively, can be updated as
\vspace{-0.2cm}
\begin{equation}
v_{\mathrm{X}}(t+1)=
\begin{cases}
0, & \text{if } I_{\mathrm{X}}(t)=1,\ I_{arr}(t)=0, \\
1, & \text{if } I_{\mathrm{X}}(t)=1,\ I_{arr}(t)=1, \\
v_{\mathrm{X}}(t), & \text{if } I_{\mathrm{X}}(t)=0, \ I_{arr}(t)=0, \\
v_{\mathrm{X}}(t)+1, & \text{if } I_{\mathrm{X}}(t)=0, \ I_{arr}(t)=1.
\end{cases}
\end{equation}
where ${\mathrm{X}}\in\{\mathrm{D,E}\}$.

\textit{Definition of Semantic Secrecy Region}: For an average VAoI threshold $\Gamma$ at $\mathrm{D}$ and a semantic freshness asymmetry ratio $\epsilon$, the $(\Gamma,\epsilon)$-SSR is defined as the region in which the probability that the instantaneous VAoI at $\mathrm{E}$ is larger than that at $\mathrm{D}$ is no smaller than $\epsilon$ (i.e., ${\rm Pr}(v_\mathrm{E}>v_\mathrm{D})\geq \epsilon$), given that $\mathbb{E}[v_\mathrm{D}]\leq \Gamma$.

\section{VAoI Analysis}
To characterize the probability ${\rm Pr}(v_\mathrm{E}>v_\mathrm{D})$, we model the system evolution as a two-dimensional Markov chain. The system state in slot $t$ is denoted by $s(t)=(v_\mathrm{D}(t),v_\mathrm{E}(t))$. 
Let $\pi_{i,j}$ denote the steady-state probability of state $(i,j)$, $\forall i,j\geq 0$ under a stationary transmission policy. The state transitions depend on four concurrent events: new version arrivals, source transmission attempts, and the respective reception outcomes at $\mathrm{D}$ and $\mathrm{E}$. To simplify the analysis, we partition the state space into six functional categories: $\mathbf{c}_1=\{(0,0)\}$, $\mathbf{c}_2=\{(1,1)\}$, $\mathbf{c}_3=\{(1,0)\}$, $\mathbf{c}_4=\{(i,j)\}_{i\geq 1,j\geq 1}\setminus \{(1,1)\}$, $
\mathbf{c}_5=\{(i,0)\}_{i>1}$, and $
\mathbf{c}_6=\{(0,j)\}_{j\geq 1}$.

In the fully synchronized state $\mathbf{c}_1=\{(0,0)\}$, the state either remains at $(0,0)$ when no new version arrives or transits to $(1,1)$ when a new version arrives with probability $\lambda$,
\vspace{-0.05cm}
\begin{equation}
\begin{aligned}
P(s'' \mid s=(0,0))=
\begin{cases}
\lambda, &\text{if } s''=(1,1), \\
1-\lambda,& \text{if } s''=(0,0).
\end{cases}
\end{aligned}
\label{eq:c1-transition}
\vspace{-0.05cm}
\end{equation}

In $\mathbf{c}_2=\{(1,1)\}$, i.e., both nodes are exactly one version behind $\mathrm{S}$, the state transition depends on whether a new version arrives, on transmission attempts at the source, and on the reception outcomes at $\mathrm{D}$ and $\mathrm{E}$, given by 
\begin{equation}
\begin{aligned}
&P(s'' \mid s=(1,1)) = \\
&\begin{cases}
(1-\lambda)(1-p_{tx}) + \lambda p_{tx}pq + (1-\lambda)p_{tx}(1-p)(1-q), \\
\multicolumn{1}{r}{\text{if } s''=(1,1),} \\
(1-\lambda) p_{tx}pq, \hfill \quad \quad\text{if } s''=(0,0), \\
(1-\lambda) p_{tx}(1-p)q, \hfill \quad\quad \text{if } s''=(1,0), \\
(1-\lambda) p_{tx}p(1-q), \hfill \quad \quad\text{if } s''=(0,1), \\
\lambda p_{tx}(1-p)q, \hfill \quad\quad  \text{if } s''=(2,1), \\
\lambda p_{tx}p(1-q), \hfill \quad \quad\text{if } s''=(1,2), \\
\lambda p_{tx}(1-p)(1-q) + \lambda (1-p_{tx}), \hfill\quad \quad \text{if } s''=(2,2).
\end{cases}
\end{aligned}
\label{eq:c2-transition}
\end{equation}

In categories $\mathbf{c}_3 \cup\mathbf{c}_5$, $\mathrm{D}$ lags while $\mathrm{E}$ is synchronized. If no new update arrives, a successful reception at the $\mathrm{D}$ resets $v_\mathrm{D}$ to $0$. However, if a new update arrives, $v_\mathrm{E}$ immediately increases to $1$, and $v_\mathrm{D}$ depends on the current transmission. We thus have
\begin{equation}
\begin{aligned}
&P({s'' \mid s=(i,0)\in\mathbf{c}_3\cup\mathbf{c}_5}) =\\
&
\begin{cases}
(1-\lambda)p_{tx}p, & \text{if } s''=(0,0), \\
(1-\lambda)(p_{tx}(1-p) + 1-p_{tx}),
& \text{if } s''=(i,0), \\
\lambda p_{tx}p, & \text{if }  s''=(1,1), \\
\lambda p_{tx}(1\!-\!p)\! +\!\lambda(1\!-\!p_{tx}),
 & \text{if }s''\!\!=(i\!+\!1,1).
\end{cases}
\end{aligned}
\label{eq:c3-transition}
\end{equation}

For states in $\mathbf{c}_4$,  the version lags at $\mathrm{D}$ and $\mathrm{E}$ are behind that at $\mathrm{S}$. State transitions depend on whether receivers can decode the current packet before it is preempted by a new version. A successful reception by either node resets its respective VAoI to either $0$ (if no arrival occurs) or $1$ (if a new version arrives). Therefore, for any $(i,j)\in\mathbf{c}_4$, 
\begin{equation}
\begin{aligned}
&P({s'' \mid s=(i,j)\in\mathbf{c}_4}) =\\
&
\begin{cases}
\lambda p_{tx}pq, & \hspace{-2em} \text{if }  s''=(1,1), \\
\lambda p_{tx}(1-p)q, & \hspace{-2em} \text{if }s''=(i+1,1), \\
(1-\lambda) p_{tx}pq, & \hspace{-2em} \text{if } s''=(0,0), \\
(1-\lambda)( p_{tx}(1-p)(1-q)  +1-p_{tx}), \\
& \hspace{-2em}  \text{if } s''=(i,j), \\
\lambda (p_{tx}(1-p)(1-q)  +1-p_{tx}),
& \hspace{-2em} \text{if }  s''=(i+1,j+1), \\
(1-\lambda) p_{tx}(1-p)q,& \hspace{-2em}  \text{if }  s''=(i,0), \\
(1-\lambda) p_{tx}p(1-q), & \hspace{-2em} \text{if }  s''=(0,j), \\
\lambda p_{tx}p(1\!-\!q), & \hspace{-2em} \text{if } s''=(1,j\!+\!1).
\end{cases}
\end{aligned}
\label{eq:c4-transition}
\end{equation}
As for states in $\mathbf{c}_6$, since $\mathrm{D}$ is fully synchronized, $\mathrm{S}$ will not conduct transmission according to the proposed policy. Thus, the state only changes when a new version arrives, causing $v_\mathrm{D}$ to become $1$ and increasing $v_\mathrm{E}$ by one, where
\begin{equation}
\begin{aligned}
P(\!{s''\! \mid \! s=\!(0,\!j)\!\in\!\mathbf{c}_6}) \!=\!
\begin{cases}
\lambda, \quad \text{if } s''=(1,j+1),\\
1-\lambda, \quad \text{if } s''=(0,j).
\end{cases}
\end{aligned}
\label{eq:c6-transition}
\end{equation}
\subsection{Class-Level Balance Equations}
To derive the steady-state distribution of the Markov chain, we proceed in two stages. First, we determine the aggregate steady-state probability mass for each functional class, denoted by $\pi_{\mathbf{c}_k}$, and then leverage these aggregate results to resolve individual-state probabilities. For the singleton classes, the mappings are $\pi_{\mathbf{c}_1}=\pi_{0,0}$, $\pi_{\mathbf{c}_2}=\pi_{1,1}$, and $\pi_{\mathbf{c}_3}=\pi_{1,0}$.

As the six classes cover the whole state space, the class-level distribution must satisfy the normalization condition
\begin{equation}
\pi_{\mathbf{c}_1}+\pi_{\mathbf{c}_2}+\pi_{\mathbf{c}_3}
+\pi_{\mathbf{c}_4}+\pi_{\mathbf{c}_5}+\pi_{\mathbf{c}_6}=1.
\label{eq:balance-normalization}
\end{equation}
To analyze the distribution, we then group states where $\mathrm{D}$ is fully synchronized ($v_\mathrm{D}=0$), i.e., classes $\mathbf{c}_1$ and $\mathbf{c}_6$. Based on the transition laws, the balance for this subspace is
\begin{equation}
\begin{aligned}
&\pi_{\mathbf{c}_1}+\pi_{\mathbf{c}_6}
= (\pi_{\mathbf{c}_1}+\pi_{\mathbf{c}_6})(1-\lambda) \\
&\qquad+(\pi_{\mathbf{c}_2}+\pi_{\mathbf{c}_3}+\pi_{\mathbf{c}_4}+\pi_{\mathbf{c}_5})
(1-\lambda)p_{tx}p.
\end{aligned}
\label{eq:balance-c1c6}
\end{equation}

Similarly, we then consider the states where $\mathrm{E}$ is synchronized ($v_\mathrm{E}=0$), comprising $\mathbf{c}_3$ and $\mathbf{c}_5$ with the following balance equation
\begin{equation}
\begin{aligned}
&\pi_{\mathbf{c}_3}+\pi_{\mathbf{c}_5}
= (\pi_{\mathbf{c}_2}+\pi_{\mathbf{c}_4})(1-\lambda)p_{tx}(1-p)q \\
& +(\pi_{\mathbf{c}_3}\!+\!\pi_{\mathbf{c}_5})
\left((1\!-\!\lambda)p_{tx}(1\!-\!p)\!+\!(1\!-\!\lambda)(1\!-\!p_{tx})\right).
\end{aligned}
\label{eq:balance-c3c5}
\end{equation}
The individual balances for classes $\mathbf{c}_6$ and $\mathbf{c}_1$ are further refined as follows
\begin{equation}
\pi_{\mathbf{c}_6}
=(\pi_{\mathbf{c}_4}+\pi_{\mathbf{c}_2})(1-\lambda)p_{tx}p(1-q)
+\pi_{\mathbf{c}_6}(1-\lambda),
\label{eq:balance-c6}
\end{equation}
\begin{equation}
\begin{aligned}
\pi_{\mathbf{c}_1}
= {} & (\pi_{\mathbf{c}_3}+\pi_{\mathbf{c}_5})(1-\lambda)p_{tx}p \\
& +(\pi_{\mathbf{c}_4}+\pi_{\mathbf{c}_2})(1-\lambda)p_{tx}pq
+\pi_{\mathbf{c}_1}(1-\lambda).
\end{aligned}
\label{eq:balance-c1}
\end{equation}
For class $\mathbf{c}_2$, the balance accounts for arrivals and the varying success outcomes of transmission attempts.
\begin{equation}
\begin{aligned}
&\pi_{\mathbf{c}_2}\!
=\!\pi_{\mathbf{c}_2}\left((1\!-\!\lambda)(1\!-\!p_{tx}\!+\!p_{tx}(1\!-\!p)(1\!-\!q))
\!+\!\lambda p_{tx}pq\right)\\ &~~~~~~~+\!(\pi_{\mathbf{c}_3}+\pi_{\mathbf{c}_5})\lambda p_{tx}p
+\pi_{\mathbf{c}_1}\lambda+\pi_{\mathbf{c}_4}\lambda p_{tx}pq. \\
\end{aligned}
\label{eq:balance-c2}
\end{equation}

Solving the system of equations \eqref{eq:balance-normalization}--\eqref{eq:balance-c2} yields closed-form expressions for the steady-state probabilities. The mass for the fully synchronized state $(0,0)$ is
\begin{equation}
\begin{aligned}
\pi_{\mathbf{c}_1}=&~\pi_{0,0}= \frac{(1-\lambda)p_{tx}p}{\lambda+(1-\lambda)p_{tx}p} \\
&\times \frac{q(1-(1-p_{tx})(1-\lambda))}
{1-(1-\lambda)(1-p_{tx}+p_{tx}(1-p)(1-q))}.
\end{aligned}
\label{eq:class-c1}
\end{equation}
The probabilities for the lagging states $\pi_{\mathbf{c}_2}$ and $\pi_{\mathbf{c}_3}$ follow similarly, reflecting the impact of the randomized transmission policy and channel conditions.
\begin{equation}
\begin{aligned}
&\pi_{\mathbf{c}_2}=\pi_{1,1}=\frac{(1-(1-\lambda)(1-p_{tx}))p_{tx}pq\lambda}
{\lambda+(1-\lambda)p_{tx}p} \\
& \times \frac{1}
{{(1\!-\!(1\!-\!\lambda)(1\!-\!p_{tx})\!-\!(1\!-\!\lambda)p_{tx}(1\!-\!p)(1\!-\!q))}^2}.
\end{aligned}
\label{eq:class-c2}
\end{equation}
\begin{equation}
\begin{aligned}
\pi_{\mathbf{c}_3}&=\pi_{1,0}
=\frac{(1-\lambda)(1-p)pq p_{tx}^2\lambda}
{{\left(\lambda+(1-\lambda)p_{tx}p\right)}^2} \\
& \times \frac{1}
{{\left(1-(1-\lambda)\left(1-p_{tx}+p_{tx}(1-p)(1-q)\right)\right)}}.
\end{aligned}
\label{eq:class-c3}
\end{equation}
The cumulative mass for the region (i.e., $\mathbf{c}_4\cup\mathbf{c}_2$) where both nodes lag and the individual masses for $\mathbf{c}_5$ and $\mathbf{c}_6$ are determined by substituting these primary results back into the class balance equations, given by
\begin{equation}
\begin{aligned}
&\pi_{\mathbf{c}_4}+\pi_{\mathbf{c}_2}
=   \frac{\lambda}{\lambda+(1-\lambda)p_{tx}p} \\
& \!\times\! \frac{1-(1-\lambda)(1-p_{tx})-(1-\lambda)p_{tx}(1-p)}
{1\!-\!(1\!-\!\lambda)(1\!-\!p_{tx})\!-\!(1\!-\!\lambda)(1\!-\!p_{tx})(1\!-\!p)(1\!-\!q)},
\end{aligned}
\label{eq:c42}
\end{equation}
\begin{equation}
\begin{aligned}
\pi_{\mathbf{c}_5}
\!= \ &
\frac{(1-\lambda)p_{tx}(1-p)q}
{1\!-\!(1\!-\!\lambda)(1\!-\!p_{tx})\!-\!(1\!-\!\lambda)(1\!-\!p_{tx})(1\!-\!p)(1\!-\!q)} \\
&\times\frac{\lambda}{\lambda+(1-\lambda)p_{tx}p} -\pi_{\mathbf{c}_3},
\end{aligned}
\label{eq:class-c5}
\end{equation}
\begin{equation}
\begin{aligned}
\pi_{\mathbf{c}_6}
&= \frac{(1-\lambda)p_{tx}p}{\lambda+(1-\lambda)p_{tx}p} -\frac{(1-\lambda)p_{tx}p}{\lambda+(1-\lambda)p_{tx}p}\\& \quad \times
\frac{q(1-(1-p_{tx})(1-\lambda))}
{1-(1-\lambda)(1-p_{tx}+p_{tx}(1-p)(1-q))}.
\end{aligned}
\label{eq:class-c6}
\end{equation}

\subsection{Representative Single-State Probabilities}
With the aggregate class masses available, we next resolve representative individual-state probabilities. This derivation begins with the entry state of class $\mathbf{c}_6$, proceeds to the boundary state $(1,2)$ within the lagging region, and extends to a general recursive form for all states $(1,j)$ where $j \geq 2$.

We first analyze $\pi_{0,1}$, as it serves as the foundation for the subsequent recursion. The inflow into state $(0,1)$ originates from synchronized successes in class $\mathbf{c}_2$ and the first positive row of $\mathbf{c}_4$ (i.e., states $(i,1)$ with $i\geq 2$), we have
\vspace{-0.05cm}
\begin{equation}
\pi_{0,1}
=\left(\pi_{1,1}+\sum_{i=2}^{\infty}\pi_{i,1}\right)(1-\lambda)p_{tx}p(1-q)
+\pi_{0,1}(1-\lambda),
\label{eq:pi01-balance}
\vspace{-0.1cm}
\end{equation}
where $\pi_{1,1}=\pi_{\mathbf{c}_2}$ is already known from \eqref{eq:class-c2}. The remaining term is the total probability mass on states $(i,1)$ with $i\geq 2$. By collecting the inflow from $\mathbf{c}_4$, the axis $\mathbf{c}_3\cup\mathbf{c}_5$, and the self-loop within the first positive row, we obtain
\vspace{-0.05cm}
\begin{equation}
\begin{aligned}
&\sum_{i=2}^{\infty}\pi_{i,1}
\!=\!\sum_{i=2}^{\infty}\pi_{i,1}(1-\lambda)
\left(1-p_{tx}+p_{tx}(1-p)(1-q)\right)+\\ &(\pi_{\mathbf{c}_4}\!+\!\pi_{\mathbf{c}_2})\lambda p_{tx}(1\!-\!p)q \! +\!(\pi_{\mathbf{c}_3}\!+\!\pi_{\mathbf{c}_5})\lambda(1\!-\!p_{tx}\!+\!p_{tx}(1\!-\!p)).
\end{aligned}
\label{eq:sum-i1}
\end{equation}
Substituting \eqref{eq:class-c5} and \eqref{eq:c42} into \eqref{eq:sum-i1}, and then substituting the result together with \eqref{eq:class-c2} into \eqref{eq:pi01-balance}, yields
\begin{equation}
\pi_{0,1}
=\frac{(1-\lambda)(1-q)q p_{tx}^2p}
{{\left(1-(1-\lambda)(1-p_{tx}+p_{tx}(1-p)(1-q))\right)}^2}.
\label{eq:pi01}
\end{equation}

We next derive $\pi_{1,2}$,  which represents the primary boundary in the region where $\mathrm{E}$ lags further than $\mathrm{D}$ in $\mathbf{c}_4$ ($j \geq i$). Its balance equation incorporates inflows from $(0,1)$ and the entire first positive row, given by
\begin{equation}
\begin{aligned}
\pi_{1,2}
= \ & \pi_{1,2}(1-\lambda)
\left(1-p_{tx}+p_{tx}(1-p)(1-q)\right) \\
&+\pi_{0,1}\lambda+\sum_{i=1}^{\infty}\pi_{i,1}\lambda p_{tx}p(1-q).
\end{aligned}
\label{eq:pi12-balance}
\end{equation}
Using \eqref{eq:pi01}, \eqref{eq:class-c2}, and \eqref{eq:sum-i1}, we obtain
\begin{equation}
\pi_{1,2}
=\frac{(1-q)q p \lambda p_{tx}^2}
{{\left(1-(1-\lambda)(1-p_{tx}+p_{tx}(1-p)(1-q))\right)}^3}.
\label{eq:pi12}
\end{equation}

We then proceed to derive $\pi_{0,2}$. For the states $(i,2)$ with $i\geq 2$, namely the second positive row excluding the boundary state, the total probability mass satisfies
\begin{equation}
\begin{aligned}
& \sum_{i=2}^{\infty}\pi_{i,2}
=  \sum_{i=1}^{\infty}\pi_{i,1}\lambda
\left(1-p_{tx}+p_{tx}(1-p)(1-q)\right) \\
& +\sum_{i=2}^{\infty}\pi_{i,2}(1-\lambda)
\left(1-p_{tx}+p_{tx}(1-p)(1-q)\right).
\end{aligned}
\label{eq:sum-i2}
\end{equation}
The balance equation for $(0,2)$ is
\vspace{-0.05cm}
\begin{equation}
\pi_{0,2}=\sum_{i=1}^{\infty}\pi_{i,2}(1-\lambda)p_{tx}p(1-q)+\pi_{0,2}(1-\lambda).
\label{eq:pi02}
\end{equation}
Substituting \eqref{eq:sum-i2} into \eqref{eq:pi02} shows that $\pi_{0,2}$ is obtained from $\pi_{0,1}$ through the factor
\begin{equation}
\gamma
=1-\frac{p_{tx}q}
{{\left(1-(1-\lambda)(1-p_{tx}+p_{tx}(1-p)(1-q))\right)}}.
\label{eq:gamma}
\end{equation}
That is, $\pi_{0,2}=\pi_{0,1}\gamma$.
The next state in the sequence is $\pi_{1,3}$. Its balance equation is
\begin{equation}
\begin{aligned}
\pi_{1,3}
=  & \pi_{1,3}(1-\lambda)
\left(1-p_{tx}+p_{tx}(1-p)(1-q)\right) \\
& +\pi_{0,2}\lambda+\sum_{i=1}^{\infty}\pi_{i,2}\lambda p_{tx}p(1-q).
\end{aligned}
\label{eq:pi13-balance}
\end{equation}
Substituting $\pi_{0,2}$ and \eqref{eq:sum-i2} into \eqref{eq:pi13-balance} yields $\pi_{1,3}=\pi_{1,2}\gamma$.
By induction, we have
\begin{equation}
\pi_{0,j}=\pi_{0,1}\gamma^{j-1},\ j\geq 1,\text{and } \pi_{1,j}=\pi_{1,2}\gamma^{j-2},\ j\geq 2.
\label{eq:pi0j}
\end{equation}

We next turn to the diagonal states in $\mathbf{c}_4$. For $i\geq 1$, the state $(i+1,i+1)$ can only be reached from $(i,i)$ when a new version arrives, and the transmission does not take place or fails at both receivers; the same event also gives the self-loop at $(i+1,i+1)$ when no new version arrives. Therefore,
\begin{equation}
\begin{aligned}
&\pi_{i+1,i+1}
 = \pi_{i,i}\lambda\left(1-p_{tx}+p_{tx}(1-p)(1-q)\right) \\
& +\pi_{i+1,i+1}(1-\lambda)\left(1-p_{tx}+p_{tx}(1-p)(1-q)\right).
\end{aligned}
\label{eq:piii-balance}
\end{equation}
Rearranging \eqref{eq:piii-balance} yields
$\pi_{i,i}=\pi_{1,1}m^{i-1}$, $i\geq 1$,
where 
$m=\frac{\lambda\left(1-p_{tx}+p_{tx}(1-p)(1-q)\right)}
{1-(1-\lambda)\left(1-p_{tx}+p_{tx}(1-p)(1-q)\right)}$.

\vspace{0.07cm}
We finally extend the recursion from the boundary of $\mathbf{c}_4$ to its interior. Consider a state $(i,j)\in\mathbf{c}_4$ with $j>i$, which lies in the region $j\geq i$. From \eqref{eq:c4-transition}, the only predecessor that reaches $(i,j)$ through a new version arrival is $(i-1,j-1)$, while the self-loop at $(i,j)$ occurs when no new version arrives and the transmission either does not take place or fails at both receivers. Therefore, the local balance equation is
\begin{equation}
\begin{aligned}
\pi_{i,j}
= {} & \pi_{i-1,j-1}\lambda\left(1-p_{tx}+p_{tx}(1-p)(1-q)\right) \\
& +\pi_{i,j}(1-\lambda)\left(1-p_{tx}+p_{tx}(1-p)(1-q)\right).
\end{aligned}
\label{eq:piij-balance}
\end{equation}
Rearranging \eqref{eq:piij-balance} gives the diagonal recursion $\pi_{i,j}=\pi_{i-1,j-1}m$ and iterating the recursion along the diagonal until $i=1$ yields
\begin{equation}
\pi_{i,j}=\pi_{1,j-i+1}{m}^{i-1},\qquad j>i.
\label{eq:piij-j-greater-i}
\end{equation} 

So far, we have characterized the steady-state distribution for the states $j\geq i$. These results provide the necessary components to compute the probability of semantic asymmetry, i.e., $\mathrm{Pr}(v_{\mathrm{E}} > v_{\mathrm{D}})$. The derivation of steady-state distribution for states $\{(i,j)\}_{i>j}$ is similar to the derivation for states $\{(i,j)\}_{i<j}$ and is omitted due to the space limit. 

\subsection{$(\Gamma,\epsilon)$-SSR Analysis}
To determine the $(\Gamma,\epsilon)$-SSR, we first analyze the probability of semantic asymmetry $\mathrm{Pr}(v_{\mathrm{E}} > v_{\mathrm{D}})$. Based on the steady-state distribution derived above, we have
\begin{equation}\label{prt}
\begin{aligned}
\mathrm{Pr}(v_{\mathrm{E}} \!> \!v_{\mathrm{D}})
\!\overset{a}{=}  \! 
\sum_{i=0}^{\infty}\!\left(\!\pi_{0,1}\gamma^{i}+\!\sum_{j=2}^{\infty}\pi_{1,j}m^{i}\!\!\right) 
\!\!\overset{b}{=}\!\!\frac{p\!-\!pq}{p\!+\!q\!-\!pq},
\end{aligned}
\end{equation} where $\overset{a}{=}$ is obtained by substituting \eqref{eq:pi0j} and \eqref{eq:piij-j-greater-i} into \eqref{prt}, and $\overset{b}{=}$ is obtained by substituting \eqref{eq:pi01}, \eqref{eq:pi12} and \eqref{eq:pi0j} together with \cite[Eq. 0.112]{gradshteyn2014table}.  By observation, \eqref{prt} is consistent with the system behavior. In particular, when $q=0$, $\mathrm{E}$ cannot overhear any transmission and $\mathrm{Pr}(v_{\mathrm{E}} > v_{\mathrm{D}})=1$. In contrast, when $q=1$, $\mathrm{E}$ perfectly overhears the channel and $\mathrm{Pr}(v_{\mathrm{E}} > v_{\mathrm{D}})=0$. These two boundary cases agree with intuition and support the correctness of the analysis.

After that, we analyze $\mathbb{E}[v_{\mathrm{D}}]$. Since the VAoI evolution at $\mathrm{D}$ does not depend on the eavesdropper's reception process, the analysis reduces to a one-dimensional Markov chain. Define $\pi_{i,\cdot} \triangleq \sum_{j} \pi_{i,j}$ as the steady-state probability of $v_{\mathrm{D}}=i$. From \eqref{eq:class-c1}-\eqref{eq:class-c3}, \eqref{eq:class-c6}, \eqref{eq:pi12} and \eqref{eq:pi0j}, we first obtain $\pi_{0,\cdot}=\frac{(1-\lambda)p_{tx}p}{\lambda+(1-\lambda)p_{tx}p}$ and $\pi_{1,\cdot}=\frac{\lambda p_{tx}p}{\left(\lambda+(1-\lambda)p_{tx}p\right)^2}$. For the transition from state $(i,\cdot)$ to $(i+1,\cdot)$, for all $i\geq 1$, the balance relation is $\pi_{i+1,\cdot}=\left(\pi_{i,\cdot}\lambda+\pi_{i+1,\cdot}(1-\lambda)\right)(1-p_{tx}+p_{tx}(1-p))$. Applying this relation recursively yields
\vspace{-0.07cm}
\begin{equation}
    \label{eq:idots}
    \pi_{i+1,\cdot}=\pi_{i,\cdot}\left(\frac{\lambda(1-p_{tx}p)}{\lambda+(1-\lambda)p_{tx}p}\right)^i.
\vspace{-0.05cm}
\end{equation} 
Then, $\mathbb{E}[v_{\mathrm{D}}]=\sum_{i=0}^\infty i \pi_{i,\cdot} $ can be calculated as follows,
\vspace{-0.03cm}
\begin{equation}
    \label{eq:vd}
    \mathbb{E}[v_{\mathrm{D}}]\!=\!\sum_{i=1}^\infty i \pi_{1,\cdot}\!\left(\!\frac{\lambda(1\!-\! p_{tx}p)}{\lambda\!+\!(1\!-\!\lambda)p_{tx}p}\!\right)^{i\!-\!1}\!\overset{c}{=}\!\frac{\lambda}{p_{tx}p},
\vspace{-0.05cm}
\end{equation} where $\overset{c}{=}$ is obtained by adopting \cite[Eq.0113]{gradshteyn2014table}. \eqref{prt} and \eqref{eq:vd} together provide the key quantities required to characterize the $(\Gamma,\epsilon)$-SSR.

To study the $(\Gamma,\epsilon)$-SSR under joint design of the transmission probability and the transmit power, we write the successful reception probabilities as $p(P_{\mathrm{S}})$ and $q(P_{\mathrm{S}},\mathbf{w}_{\mathrm{E}})$, where $p_{tx}\in(0,1]$, $P_{\mathrm{S}}\in(0,P_{\max}]$ and $\mathbf{w}_{\mathrm{E}}$ denotes the location of $\mathrm{E}$. 
Accordingly, the $(\Gamma,\epsilon)$-SSR is defined as 
\begin{equation}
\begin{aligned}
&\mathcal{S}(\Gamma,\epsilon)
\!=\!\bigl\{\mathbf{w}_{\mathrm{E}}| \exists (p_{tx},P_{\mathrm{S}}), p_{tx}\!\in\!(0,1],\ P_{\mathrm{S}}\!\in\!(0,P_{\max}], \\
& ~~\text{s.t.} ~~
\frac{\lambda}{p_{tx}p(P_{\mathrm{S}})}\le \Gamma, \\& ~~~~~~~~\frac{p(P_{\mathrm{S}})-p(P_{\mathrm{S}})q(P_{\mathrm{S}},\mathbf{w}_{\mathrm{E}})}
{p(P_{\mathrm{S}})+q(P_{\mathrm{S}},\mathbf{w}_{\mathrm{E}})-p(P_{\mathrm{S}})q(P_{\mathrm{S}},\mathbf{w}_{\mathrm{E}})} \ge \epsilon \bigr\}.
\end{aligned}
\label{eq:ssr-def}
\end{equation} 
We first examine the role of the transmission probability $p_{tx}$. From the destination semantic constraint, we have 
\begin{equation}
\frac{\lambda}{p_{tx}p(P_{\mathrm{S}})}\le \Gamma
\quad \Longleftrightarrow \quad
p(P_{\mathrm{S}})\ge \frac{\lambda}{\Gamma p_{tx}}.
\label{eq:ps-feasible}
\end{equation}
Since $p(P_{\mathrm{S}})$ increases with $P_{\mathrm{S}}$, \eqref{eq:ps-feasible} induces a lower bound on the feasible transmit power. Let $P_{\min}(p_{tx})$ denote the minimum power satisfying the destination freshness constraint 
$p\!\left(P_{\min}(p_{tx})\right)=\frac{\lambda}{\Gamma p_{tx}}$. Thus, for a given $p_{tx}$, the feasible power interval is $[P_{\min}(p_{tx}),P_{\max}]$, and a larger $p_{tx}$ leads to a smaller $P_{\min}(p_{tx})$.

This observation separates the joint design into two steps. For each fixed $p_{tx}$, the SSR is determined by optimizing $P_{\mathrm{S}}$ over the feasible interval $[P_{\min}(p_{tx}),P_{\max}]$. Then, the optimal value of $p_{tx}$ is selected by comparing the resulting feasible sets. Since $\mathrm{Pr}(v_{\mathrm{E}}>v_{\mathrm{D}})$ depends on $P_{\mathrm{S}}$ but not on $p_{tx}$, the role of $p_{tx}$ is only to determine the admissible range of $P_{\mathrm{S}}$.
\begin{proposition}\label{prop:ssr-reduction}
Assume that $p_{tx}\in(0,1]$ and $P_{\mathrm{S}}\in(0,P_{\max}]$. Then the following statements hold.
\begin{enumerate}
\item The destination freshness constraint $\mathbb{E}[v_{\mathrm{D}}]\le \Gamma$ is equivalent to $P_{\mathrm{S}}\ge P_{\min}(p_{tx})$.
\item Since $P_{\min}(p_{tx})$ decreases with $p_{tx}$, the smallest feasible transmit power is attained at $p_{tx}=1$.
\item A feasible design exists if and only if
$P_{\max}\ge P_{\min}(1)$, or, equivalently,
$p(P_{\max})\ge \frac{\lambda}{\Gamma}$.
\item For any fixed $p_{tx}$, the optimal transmit power is obtained from $P_{\mathrm{S}}^\star(p_{tx})
=
\arg\max_{P_{\mathrm{S}}\in[P_{\min}(p_{tx}),P_{\max}]}
|\mathcal{S}(\Gamma,\epsilon;P_{\mathrm{S}})|$.

\item Since $P_{\min}(p_{tx})$ decreases with $p_{tx}$ and $\mathrm{Pr}(v_{\mathrm{E}}>v_{\mathrm{D}})$ does not depend directly on $p_{tx}$, $p_{tx}^\star=1$ is one of the optimal transmission probabilities.
\end{enumerate}
\end{proposition}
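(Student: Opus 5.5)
The argument rests on two earlier facts: the closed-form destination age $\mathbb{E}[v_{\mathrm{D}}]=\lambda/(p_{tx}p(P_{\mathrm{S}}))$ from \eqref{eq:vd}, and the observation that $\mathrm{Pr}(v_{\mathrm{E}}>v_{\mathrm{D}})$ in \eqref{prt} depends only on $(p,q)$, hence only on $P_{\mathrm{S}}$ and $\mathbf{w}_{\mathrm{E}}$. We also use that $p(P_{\mathrm{S}})=\exp(-\alpha_{\mathrm{D}}/P_{\mathrm{S}})$ is continuous and strictly increasing on $(0,\infty)$, with range $(0,1)$. This makes it a bijection onto its range, with inverse $P=-\alpha_{\mathrm{D}}/\ln(\cdot)$. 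We take $P_{\min}(p_{tx})=-\alpha_{\mathrm{D}}/\ln\bigl(\lambda/(\Gamma p_{tx})\bigr)$ whenever $\lambda/(\Gamma p_{tx})<1$. In the degenerate case $\lambda/(\Gamma p_{tx})\ge 1$, the constraint is never satisfied, and we set $P_{\min}=+\infty$ by convention.

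For item 1, we use \eqref{eq:ps-feasible}: $\mathbb{E}[v_{\mathrm{D}}]\le\Gamma \iff p(P_{\mathrm{S}})\ge \lambda/(\Gamma p_{tx})$. Applying the increasing inverse of $p$ turns this into $P_{\mathrm{S}}\ge P_{\min}(p_{tx})$.

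For item 2, note that $\lambda/(\Gamma p_{tx})$ is strictly decreasing in $p_{tx}$ and $p^{-1}$ is increasing. Hence $P_{\min}(p_{tx})$ is decreasing, and its infimum over $(0,1]$ is attained at $p_{tx}=1$.

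For item 3, a feasible pair exists iff $[P_{\min}(p_{tx}),P_{\max}]\neq\emptyset$ for some $p_{tx}$. By item 2 this holds iff $P_{\max}\ge P_{\min}(1)$. Applying the monotone map $p$ gives the equivalent condition $p(P_{\max})\ge\lambda/\Gamma$.

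Item 4 is essentially definitional. For fixed $p_{tx}$, the set \eqref{eq:ssr-def} restricted to that $p_{tx}$ is $\bigcup_{P_{\mathrm{S}}\in[P_{\min}(p_{tx}),P_{\max}]}\mathcal{S}(\Gamma,\epsilon;P_{\mathrm{S}})$. Here $\mathcal{S}(\Gamma,\epsilon;P_{\mathrm{S}})$ is the set of $\mathbf{w}_{\mathrm{E}}$ satisfying the asymmetry constraint at that power. Choosing the single power maximizing the area of this set gives the stated $P_{\mathrm{S}}^\star(p_{tx})$. The maximum exists if the area is continuous in $P_{\mathrm{S}}$ on the compact interval. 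This follows because $q$ is continuous in $(P_{\mathrm{S}},\mathbf{w}_{\mathrm{E}})$ and the level sets of \eqref{prt} vary continuously.

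Item 5 is the key step. Fix any $p_{tx}$ and any feasible power $P_{\mathrm{S}}\in[P_{\min}(p_{tx}),P_{\max}]$. By item 2, $P_{\min}(1)\le P_{\min}(p_{tx})\le P_{\mathrm{S}}$, so the same power is feasible at $p_{tx}=1$. Moreover, the asymmetry probability \eqref{prt} is unchanged by switching $p_{tx}$ to 1, since it does not involve $p_{tx}$. Hence the feasible set of powers, and therefore the achievable SSR (whether taken as a union over powers or as the maximum-area set), is monotonically nondecreasing in $p_{tx}$ under set inclusion. So $p_{tx}=1$ attains the optimum, possibly not uniquely.

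The main obstacle is making item 4 rigorous. The SSR in \eqref{eq:ssr-def} is a union over powers, whereas item 4 optimizes the area of a single power's region. We would state explicitly that item 4 characterizes the optimal fixed operating power, meaning one power for all eavesdropper locations. The argument for item 5 works under either interpretation, since it relies only on set inclusion of the feasible power intervals.
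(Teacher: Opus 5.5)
Your proposal is correct and takes the same route as the paper. The paper gives no separate proof; its justification is the discussion before the proposition: invert the increasing map $p(P_{\mathrm{S}})$ in \eqref{eq:ps-feasible} to obtain $P_{\min}(p_{tx})$, use that $P_{\min}$ decreases in $p_{tx}$, and use that \eqref{prt} does not involve $p_{tx}$. Your $P_{\min}=+\infty$ convention and your set-inclusion argument for item 5 are simply more careful versions of this.

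The concern you raise about item 4 also goes away in this model. Since \eqref{prt} is strictly decreasing in $q$ and $q$ decreases with $d_{\mathrm{SE}}$, every per-power region has the form $\{d_{\mathrm{SE}}\ge r_{\rm SSR}(P_{\mathrm{S}})\}$, the exterior of a disk centred at $\mathrm{S}$. These regions are therefore nested, so the union in \eqref{eq:ssr-def} equals the single region at the power minimizing $r_{\rm SSR}$, and no continuity-of-area argument is needed.
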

Proposition~\ref{prop:ssr-reduction} clarifies the roles of $p_{tx}$ and $P_\mathrm{S}$. The transmission probability does not appear directly in the secrecy metric $\mathrm{Pr}(v_{\mathrm{E}}>v_{\mathrm{D}})$. Instead, it affects the SSR indirectly by changing the feasible range of $P_\mathrm{S}$ through $P_{\min}(p_{tx})$. Therefore, the joint design reduces to a power optimization over the interval $[P_{\min}(1),P_{\max}]$.

We next focus on the power design with $p_{tx}=1$. Recall that under the Rayleigh fading model, we have $p(P_{\mathrm{S}})=e^{-\alpha_{\mathrm{D}}/P_{\mathrm{S}}}$ and $q(P_{\mathrm{S}},\mathbf w_{\mathrm{E}})=e^{-\alpha_{\mathrm{E}}(\mathbf w_{\mathrm{E}})/P_{\mathrm{S}}}$.
In addition, the minimum feasible power is achieved when $p_{tx}=1$, that is,
$P_{\min}(1)=\frac{\alpha_{\mathrm{D}}}{\ln(\Gamma/\lambda)},\ \Gamma>\lambda$. Thus, for a fixed feasible power $P_{\mathrm{S}}$, the secrecy constraint becomes $e^{-\alpha_{\mathrm{E}}(\mathbf w_{\mathrm{E}})/P_{\mathrm{S}}}\le q_{\rm th}(P_{\mathrm{S}},\epsilon)$, where
\begin{equation}
q_{\rm th}(P_{\mathrm{S}},\epsilon)=\frac{(1-\epsilon)p(P_{\mathrm{S}})}{\epsilon+p(P_{\mathrm{S}})-\epsilon p(P_{\mathrm{S}})}.
\label{eq:q-threshold-explicit}
\end{equation} 
Since $0<q_{\rm th}(P_{\mathrm{S}},\epsilon)<1$ for $0<\epsilon<1$, taking the natural logarithm on both sides gives $-\frac{\alpha_{\mathrm{E}}(\mathbf w_{\mathrm{E}})}{P_{\mathrm{S}}}\le \ln q_{\rm th}(P_{\mathrm{S}},\epsilon)$. Substituting $\alpha_{\mathrm{E}}(\mathbf w_{\mathrm{E}})=(2^{R_{\mathrm{tar}}}-1)d_{\mathrm{SE}}^{\eta}\sigma_{\mathrm{E}}^2/A_{\mathrm{E}}$, the SSR boundary can be written as $d_{\mathrm{SE}}\ge r_{\rm SSR}(P_{\mathrm{S}})$, where $r_{\rm SSR}(P_{\mathrm{S}})
=
\left(
\frac{g(P_{\mathrm{S}})A_{\mathrm{E}}}{(2^{R_{\mathrm{tar}}}-1)\sigma_{\mathrm{E}}^2}
\right)^{1/\eta}$, with
\begin{equation}
g(P_{\mathrm{S}})=\alpha_{\mathrm{D}}+P_{\mathrm{S}}\ln\!\left(
\frac{\epsilon}{1-\epsilon}+e^{-\alpha_{\mathrm{D}}/P_{\mathrm{S}}}
\right).
\label{eq:g-ps-correct}
\end{equation}
Therefore, over any bounded observation region, maximizing the SSR is equivalent to minimizing $g(P_{\mathrm{S}})$ over $P_{\mathrm{S}}\in[P_{\min}(1),P_{\max}]$.

\begin{proposition}\label{prop:ps-opt}
Assume $0<\epsilon<1$, $\Gamma>\lambda$, and $P_{\max}\ge P_{\min}(1)$. For $p_{tx}=1$, maximizing the $(\Gamma,\epsilon)$-SSR over a bounded observation region is equivalent to minimizing $g(P_{\mathrm{S}})$ in \eqref{eq:g-ps-correct}~over $P_{\mathrm{S}}\in[P_{\min}(1),P_{\max}]$. The optimal power is
\begin{equation}
P_{\mathrm{S}}^\star=
\begin{cases}
P_{\min}(1), & \epsilon\ge \frac{1}{2}, \\[3pt]
\min\{P_{\max},\max\{P_{\min}(1),P_0\}\}, & 0<\epsilon<\frac{1}{2},
\end{cases}
\end{equation}
where $P_0$ is the unique solution to $g'(P_0)=0$.
\end{proposition}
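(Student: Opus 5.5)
The plan is to reduce everything to a one-dimensional convexity argument for $g$. I will first confirm the equivalence claim. For $p_{tx}=1$ and a fixed feasible $P_{\mathrm{S}}$, the derivation preceding the proposition shows that the set of admissible eavesdropper locations is $\{\mathbf w_{\mathrm E}: d_{\mathrm{SE}}\ge r_{\rm SSR}(P_{\mathrm S})\}$. Intersected with a bounded observation region, these sets are nested and shrink as $r_{\rm SSR}$ grows. I will check that $g(P_{\mathrm S})>0$ by rewriting it as
\[
g(P_{\mathrm S})=P_{\mathrm S}\ln\!\left(1+\tfrac{\epsilon}{1-\epsilon}e^{\alpha_{\mathrm D}/P_{\mathrm S}}\right),
\]
so that $r_{\rm SSR}=(\kappa g)^{1/\eta}$ with $\kappa>0$ is strictly increasing in $g$. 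Maximizing the region is then the same as minimizing $g$ over $[P_{\min}(1),P_{\max}]$, which is a nonempty interval by Proposition~\ref{prop:ssr-reduction} and the assumption $P_{\max}\ge P_{\min}(1)$.

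Next I will study $g$ through the substitution $c=\epsilon/(1-\epsilon)$ and $x=\alpha_{\mathrm D}/P_{\mathrm S}$. Differentiating gives
\[
g'(P_{\mathrm S})=h(x):=\ln\!\left(c+e^{-x}\right)+\frac{x e^{-x}}{c+e^{-x}} .
\]
Differentiating $h$ in $x$, the cross terms cancel and one gets
\[
h'(x)=-\frac{c\,x e^{-x}}{(c+e^{-x})^2}<0 \quad\text{for } x>0 .
\]
Since $dx/dP_{\mathrm S}=-x/P_{\mathrm S}<0$, it follows that $g''(P_{\mathrm S})=h'(x)\,(-x/P_{\mathrm S})>0$. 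Hence $g$ is strictly convex and $g'$ is strictly increasing on $(0,\infty)$. This cancellation is the step I expect to be the crux, so I will verify it carefully.

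I will then compute the limits of $g'$. As $P_{\mathrm S}\to 0^+$ (so $x\to\infty$), $g'\to \ln c$. As $P_{\mathrm S}\to\infty$ (so $x\to 0$), $g'\to \ln(1+c)>0$.

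If $\epsilon\ge 1/2$, then $c\ge 1$ and $\ln c\ge 0$. Strict monotonicity then gives $g'>0$ on the whole interval, so $g$ is increasing and the minimizer is the left endpoint $P_{\min}(1)$.

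If $0<\epsilon<1/2$, then $\ln c<0<\ln(1+c)$. Continuity and strict monotonicity of $g'$ give a unique root $P_0$, and $g$ is decreasing on $(0,P_0)$ and increasing on $(P_0,\infty)$. The minimizer of a strictly convex function over an interval is the projection of its unconstrained minimizer onto that interval, namely $\min\{P_{\max},\max\{P_{\min}(1),P_0\}\}$. This matches the stated formula.
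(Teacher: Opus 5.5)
Your proposal is correct and follows essentially the same route as the paper. Your variable $x=\alpha_{\mathrm D}/P_{\mathrm S}$ is just $-\ln u$ for the paper's $u=e^{-\alpha_{\mathrm D}/P_{\mathrm S}}$, so your $h'(x)<0$ is the paper's $dg'/du>0$; after that come the same limits $\ln c$ and $\ln(1+c)$, the same case split at $\epsilon=1/2$, and the same projection argument. Your extra check that $g(P_{\mathrm S})=P_{\mathrm S}\ln(1+c\,e^{\alpha_{\mathrm D}/P_{\mathrm S}})>0$ makes $r_{\rm SSR}$ well defined and strictly increasing in $g$, a detail the paper leaves implicit.
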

\begin{proof}
    See Appendix.
\end{proof}

\section{Numerical Results}
In this section, we first validate the analysis of the semantic secrecy probability ${\mathrm{Pr}}(v_{\mathrm{E}}>v_{\mathrm{D}})$ and the average VAoI at the destination $\mathbb{E}[v_{\mathrm{D}}]$. Fig. \ref{fig:va1} shows simulation results for different values of $p$, setting $p_{tx}=0.8$, $\lambda=0.1$ and $q=0.4$. We can find that the simulation results coincide well with the analytical results, validating the correctness of our analysis.

\begin{figure}[htbp]
  \centering
  \includegraphics[width=0.7\columnwidth]{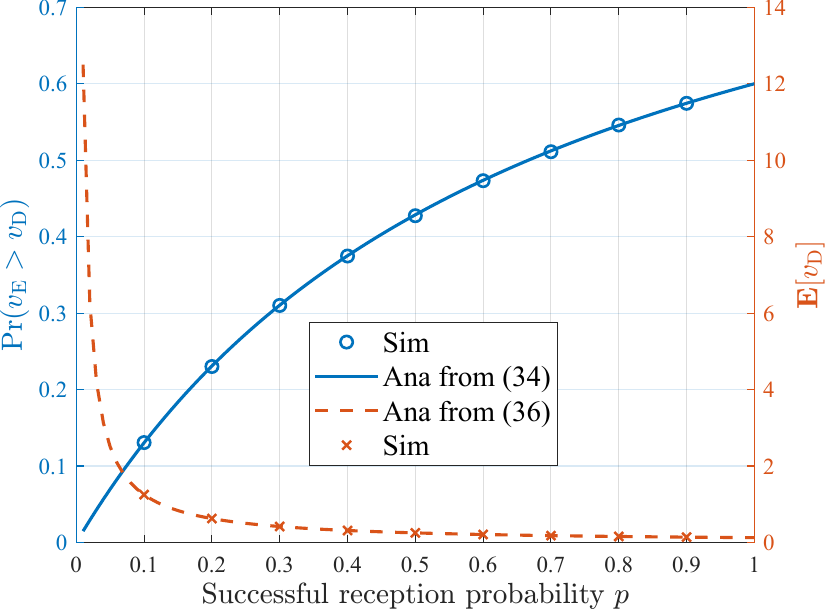}
  \caption{Validation of the analysis for $\mathbb{E}[v_{\mathrm{D}}]$ and ${\mathrm{Pr}}(v_{\mathrm{E}}>v_{\mathrm{D}})$.}
  \label{fig:va1}
\vspace{-0.6cm}
\end{figure}

To evaluate the secrecy performance of the proposed SSR-based design, we consider the following simulation scenario. $\mathrm{S}$ is located at $(0,0)$, while $\mathrm{D}$ is located at $(1,0)$. $\lambda\!=\!0.5$ and $\Gamma\!\!=\!\!2$. The channel bandwidth is set to $10$ MHz. For the fixed-power benchmark, the source transmits with power $P_{\mathrm{S}}\!=\!P_{\max}=1$W. For the optimized scheme, the transmit power $P_{\mathrm{S}}$ is adaptively selected within $[P_{\min}(p_{tx}\!=\!1),P_{\max}]$ according to the optimization developed in Sec III-C. The noise power spectral density is set to $-180$ dBm/Hz, and the large-scale path-loss model (in dB) is given by $PL(d)\!=\!128.1\!+\!37.6\log_{10}(d)$, where $d$ is in kilometers as in \cite{sp2}.

For a given semantic threshold $\Gamma$ at $\mathrm{D}$ and semantic asymmetry requirement $\epsilon$, the SSR is defined as the set of $\mathrm{E}$'s locations for which the destination semantic constraint $\mathbb{E}[v_{\mathrm{D}}]\le \Gamma$ and the semantic asymmetry condition ${\mathrm{Pr}}(v_{\mathrm{E}}>v_{\mathrm{D}})\ge \epsilon$ are simultaneously satisfied. In the simulations, we set $R=20\times 10^6$ bit/s and compare the semantic secrecy regions with and without transmit-power optimization. Fig.~\ref{fig:heatmap-comparison} shows the heatmaps of the maximum supported semantic secrecy requirement $\epsilon$ over the eavesdropper's location for the fixed-power and optimized schemes. In the fixed-power case, the secrecy region is determined by the baseline power $P_{\mathrm{S}}=1$W. In contrast, in the optimized scheme, the transmit power is selected to maximize the SSR under the freshness constraint. It can be observed that the optimized scheme enlarges the secrecy region significantly. Specifically, the contour lines corresponding to the same confidentiality threshold shrink inward after optimization (i.e., the vulnerability area shrinks), indicating that a larger security region can satisfy the preset semantic confidentiality requirement. This result confirms that transmit-power optimization improves the spatial robustness of semantic secrecy.
\begin{figure}[t]
    \centering
    \begin{subfigure}[b]{0.45\columnwidth}
        \centering
        \includegraphics[width=\linewidth]{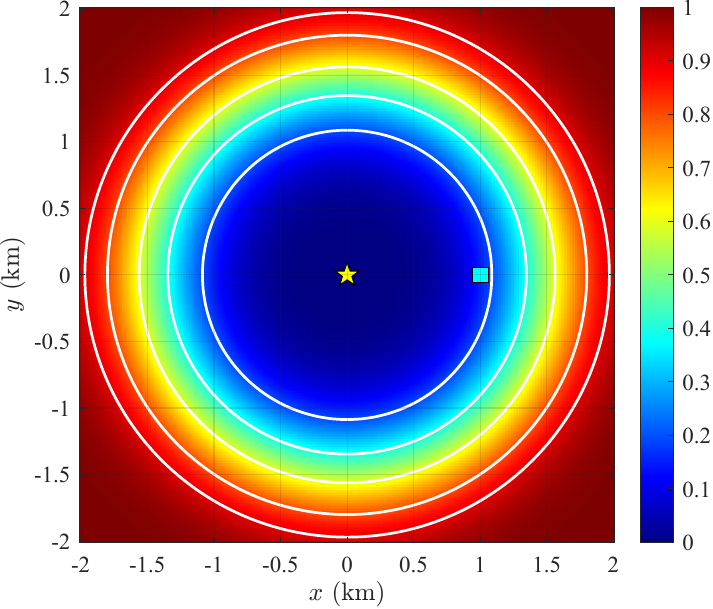}
        \vspace{-0.5cm}
        \caption{Without optimization.}
        \label{fig:heatmap-fixed}
    \end{subfigure}
    \hfill
    \begin{subfigure}[b]{0.45\columnwidth}
        \centering
        \includegraphics[width=\linewidth]{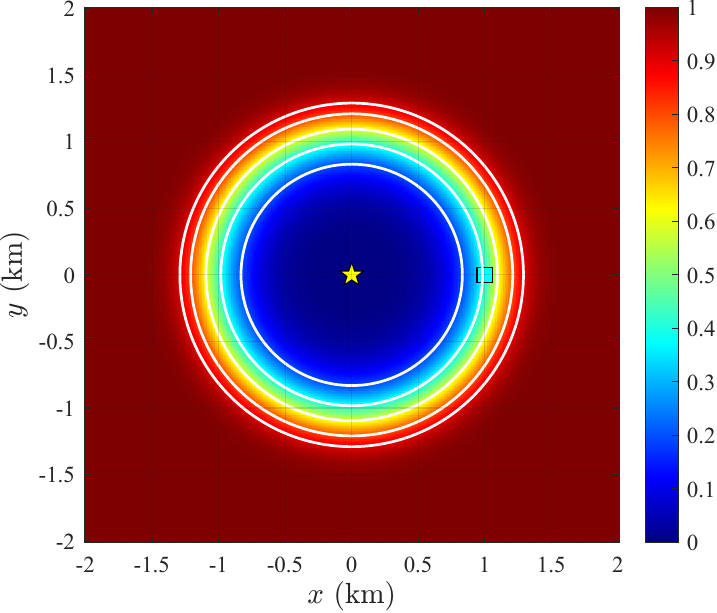}
        \vspace{-0.5cm}
        \caption{With optimization.}
        \label{fig:heatmap-optimized}
    \end{subfigure}
    \caption{Heatmaps of the maximum achievable semantic secrecy requirement $\epsilon$ across possible locations of $\mathrm{E}$, with and without transmit‑power optimization. The white circles correspond to $\epsilon\in[0.2,0.4,0.6,0.8,0.9]$ from the center outward. The star and square represent the location of $\mathrm{S}$ and $\mathrm{D}$, respectively.}
    \label{fig:heatmap-comparison}
    \vspace{-0.6cm}
\end{figure}
\section{Conclusions}
This paper examined semantic secrecy in wireless status‑updating systems from the perspective of the version age of information (VAoI). Using a two‑dimensional Markov chain, we derived closed‑form results for the destination’s average VAoI and the probability of semantic asymmetry, enabling the definition of the $(\Gamma,\epsilon)$-semantic secrecy region (SSR). The analysis showed that the transmission probability constrains the feasible power through freshness requirements, while transmit power shapes the secrecy–semantic tradeoff and the SSR boundary. Numerical results confirmed that optimized power allocation significantly enlarges the secrecy region compared with fixed‑power schemes.
\vspace{-0.15cm}

\bibliography{ref}

@INPROCEEDINGS{9517796,
  author={Yates, Roy D.},
  booktitle={IEEE ISIT}, 
  title={The Age of Gossip in Networks}, 
  year={2021},
  volume={},
  number={},
  pages={},
  doi={10.1109/ISIT45174.2021.9517796}}

@INPROCEEDINGS{10620867,
  author={Wang, Qian and Chen, He and Mohapatra, Parthajit and Pappas, Nikolaos},
  booktitle={IEEE INFOCOM Workshops}, 
  title={Secure Status Updates under Eavesdropping: Age of Information-based Secrecy Metrics}, 
  year={2024},
  volume={},
  number={},
  pages={},
  doi={10.1109/INFOCOMWKSHPS61880.2024.10620867}}

@article{sp3,
  title={Secure transmission in mmwave wiretap channels: On sector guard zone and blockages},
  author={Song, Yi and Yang, Weiwei and Xiang, Zhongwu and Liu, Yiliang and Cai, Yueming},
  journal={Entropy},
  volume={21},
  number={4},
  pages={427},
  year={2019},
  publisher={MDPI}
}

@ARTICLE{sp2,
  author={Li, Wei and Ghogho, Mounir and Chen, Bin and Xiong, Chunlin},
  journal={IEEE Communications Letters}, 
  title={Secure Communication via Sending Artificial Noise by the Receiver: Outage Secrecy Capacity/Region Analysis}, 
  year={2012},
  volume={16},
  number={10},
  pages={},
  doi={10.1109/LCOMM.2012.081612.121344}}

@INPROCEEDINGS{sp1,
  author={Marina, Ninoslav and Hjorungnes, Are},
  booktitle={IEEE WCNC}, 
  title={Characterization of the Secrecy Region of a Single Relay Cooperative System}, 
  year={2010},
  volume={},
  number={},
  pages={},
  doi={10.1109/WCNC.2010.5506566}}

@ARTICLE{aoiec2,
  author={Wang, Chao and Li, Zan and Zheng, Tong-Xing and Ng, Derrick Wing Kwan and Al-Dhahir, Naofal},
  journal={IEEE Transactions on Wireless Communications}, 
  title={Intelligent Reflecting Surface-Aided Full-Duplex Covert Communications: Information Freshness Optimization}, 
  year={2023},
  volume={22},
  number={5},
  pages={},
  doi={10.1109/TWC.2022.3217041}}

@ARTICLE{aoiec1,
  author={Wang, Yida and Yan, Shihao and Yang, Weiwei and Cai, Yueming},
  journal={IEEE Wireless Communications Letters}, 
  title={Covert Communications With Constrained Age of Information}, 
  year={2021},
  volume={10},
  number={2},
  pages={},
  doi={10.1109/LWC.2020.3031492}}

@ARTICLE{aoie2,
  author={Zheng, Lei and Ren, Juanjuan and Liu, Yong and Chen, Qingchun},
  journal={IEEE Wireless Communications Letters}, 
  title={Analysis and Optimization of Age of Information and Age of Leaked Information for {IoT} Networks}, 
  year={2024},
  volume={13},
  number={12},
  pages={},
  doi={10.1109/LWC.2024.3481059}}

@INPROCEEDINGS{aoie1,
  author={Crosara, Laura and Laurenti, Nicola and Badia, Leonardo},
  booktitle={BalkanCom}, 
  title={It Is Rude to Ask a Sensor Its Age-of-Information: Status Updates Against an Eavesdropping Node}, 
  year={2023},
  volume={},
  number={},
  pages={},
  doi={10.1109/BalkanCom58402.2023.10167914}}

@INPROCEEDINGS{Kaul2012real,
  author={Kaul, Sanjit and Yates, Roy and Gruteser, Marco},
  booktitle={IEEE INFOCOM}, 
  title={Real-time status: How often should one update?}, 
  year={2012},
  volume={},
  number={},
  pages={},
  doi={10.1109/INFCOM.2012.6195689}}

@book{gradshteyn2014table,
	title={Table of integrals, series, and products},
	author={Gradshteyn, Izrail Solomonovich and Ryzhik, Iosif Moiseevich},
	year={2014},
	publisher={Academic press}
}

@article{sicari2015security,
	title={Security, privacy and trust in {I}nternet of {T}hings: The road ahead},
	author={Sicari, Sabrina and Rizzardi, Alessandra and Grieco, Luigi Alfredo and Coen-Porisini, Alberto},
	journal={Computer networks},
	volume={76},
	pages={},
	year={2015},
	publisher={Elsevier}
}

@article{wyner1975wire,
	title={The wire-tap channel},
	author={Wyner, Aaron D},
	journal={Bell system technical journal},
	volume={54},
	number={8},
	pages={},
	year={1975},
	publisher={Wiley Online Library}
}

@book{bloch2011physical,
	title={Physical-layer security: from information theory to security engineering},
	author={Bloch, Matthieu and Barros, Joao},
	year={2011},
	publisher={Cambridge University Press}
}

@ARTICLE{ageversion,
  author={Salimnejad, Mehrdad and Kountouris, Marios and Ephremides, Anthony and Pappas, Nikolaos},
  journal={IEEE Transactions on Communications}, 
  title={{A}ge of {I}nformation {V}ersions: A Semantic View of {M}arkov Source Monitoring}, 
  year={2025},
  volume={73},
  number={12},
  pages={},
  doi={10.1109/TCOMM.2025.3616209}}

@book{mohapatra2024physical,
  title={Physical-layer Security for {6G}},
  author={Mohapatra, Parthajit and Pappas, Nikolaos and Chorti, Arsenia and Tomasin, Stefano},
  year={2024},
  publisher={John Wiley \& Sons}
}
\bibliographystyle{IEEEtran}

\appendix

\section{Proof of Proposition 2}
Note that $r_{\rm SSR}(P_{\mathrm{S}})$ is monotone in $g(P_{\mathrm{S}})$, maximizing the SSR is equivalent to minimizing $g(P_{\mathrm{S}})$. Let $c=\epsilon/(1-\epsilon)$ and $u=e^{-\alpha_{\mathrm{D}}/P_{\mathrm{S}}}$. Then
\begin{equation}
g'(P_{\mathrm{S}})=\ln(c+u)-\frac{u\ln u}{c+u}.
\end{equation}
Moreover,
\begin{equation}
\frac{d g'}{d u}=-\frac{c\ln u}{(c+u)^2}>0,
\qquad
\frac{du}{dP_{\mathrm{S}}}=\frac{\alpha_{\mathrm{D}}}{P_{\mathrm{S}}^2}u>0,
\end{equation}
which shows that $g'(P_{\mathrm{S}})$ is strictly increasing in $P_{\mathrm{S}}$. Also,
\begin{equation}
\lim_{P_{\mathrm{S}} \rightarrow 0^{+} }g'(P_{\mathrm{S}})=\ln\frac{\epsilon}{1-\epsilon},
\qquad
\lim_{P_{\mathrm{S}}\rightarrow\infty }g'(P_{\mathrm{S}})=\ln\frac{1}{1-\epsilon}>0.
\end{equation}
Hence, if $\epsilon\ge 1/2$, then $g'(P_{\mathrm{S}})\ge 0$ and the minimum is attained at $P_{\min}(1)$. If $0<\epsilon<1/2$, then $g'(P_{\mathrm{S}})$ crosses zero exactly once, yielding the unique unconstrained minimizer $P_0$; the constrained optimum is its projection onto $[P_{\min}(1),P_{\max}]$. This completes the proof.


\end{document}